\newtheorem{theorem}{Theorem} 
\newtheorem{lemma}{Lemma}
\newtheorem{Remark}{Remark}
\newcommand{\field}[1]{\mathbb{#1}}
\newcommand{\n}{\noindent}
\newcommand{\be}{\begin{equation}}
\newcommand{\ee}{\end{equation}}
\newcommand{\ben}{\begin{equation}}
\newcommand{\een}{\end{equation}}
\newcommand{\bv}{w}
\newcommand{\bb}{\gamma}
\newcommand{\izi}{\int_{0}^{\infty}}
\newcommand{\iii}{\int_{-\infty}^{\infty}}
\newcommand{\ird}{\int_{\Rdm}}
\newcommand{\Hk}{\mathfrak{H}}
\newcommand{\Hki}{\mathfrak{H}^{-1}}
\newcommand{\Hkm}{\mathfrak{H}_\nu}
\newcommand{\Hkim}{\mathfrak{H}^{-1}_\nu}
\newcommand{\F}{\mathfrak{F}}
\newcommand{\Fi}{\mathfrak{F}^{-1}}
\newcommand{\bj}{\mathfrak{j}}
\newcommand{\Ld}{\Delta}
\newcommand{\Rdm}{\mathbb{R}^2_{+}}
\title{Linear perturbations for the vacuum axisymmetric Einstein
  equations.}
\author{Sergio Dain$^{1,2}$ and Mart\'in Reiris$^{2}$\\
  \\
  $^1$Facultad de Matem\'atica, Astronom\'{i}a y F\'{i}sica, FaMAF,\\
  Universidad Nacional de C\'ordoba,\\
  Instituto de F\'{\i}sica Enrique Gaviola, IFEG, CONICET,\\
  Ciudad Universitaria, (5000) C\'ordoba, Argentina.  \\
  $^{2}$Max Planck Institute for Gravitational Physics,\\
  (Albert Einstein Institute), Am M\"uhlenberg 1,\\
  D-14476 Potsdam Germany. }
\begin{document}
\maketitle

\begin{center}
\begin{minipage}[c]{11.5cm}
\linespread{.9}%
\selectfont
{\small 
  In axial symmetry, there is a gauge for Einstein equations such that the
  total mass of the spacetime can be written as a conserved, positive definite,
  integral on the spacelike slices. This property is expected to play an
  important role in the global evolution.  In this gauge the equations reduce
  to a coupled hyperbolic-elliptic system which is formally singular at the
  axis. Due to the rather peculiar properties of the system, the local in time existence 
  has proved to resist analysis by standard methods. To analyze the principal part of the equations, which may represent the main
  source of the difficulties, we study linear perturbation around the flat Minkowski solution
  in this gauge. In this article we solve this linearized system explicitly in terms of integral
  transformations in a remarkable simple form. This representation is well suited
  to obtain useful estimates to apply in the non-linear case.}
\vspace{1cm}

\end{minipage}
\end{center}

\section{Introduction.}
\label{sec:introduction}

The study of the gravitational systems with axi-symmetry is particularly appealing for at least a pair of important reasons. Firstly, quite a large number of interesting physical phenomena are included and can described inside this category. Most notably the Kerr family and all its derived physics and mathematics belongs to it.  Secondly because certain of its mathematical formulations enjoy a interesting number of remarkable mathematical properties. In particular there is a gauge, called the {\it maximal-isothermal gauge} in the system reduced by the axisymmetric Killing field where rather important properties, as the positivity of mass, are explicitly manifest and, presumably, would became important in the mathematical investigations of these axisymmetric systems (see \cite{Dain:2008xr}, \cite{Dain:2007pk}, \cite{Dain06c} and references therein).  

To take advantages of all this one needs to show, naturally, that the reduced Einstein equations
in the maximal-isothermal gauge is a mathematically well posed theory. As it usually happens in coordinates systems adapted to
axial symmetry, the equations are formally singular at the axis. It is the particular combination of such formally singular terms
inside the whole system of equations what makes it  difficult to treat. Until now
no well posedness result has been given in the literature. 

The analysis of the linearized equations inherit similar difficulties and therefore has proved to be non-trivial 
(see the discussion in \cite{dain10}, where this
system was analyzed numerically). In this article we solve precisely this linear problem. The remarkable algebraic properties that this system has, allows us to solve it by a combination of integral transformations suitable adapted to these equations. This solution is the
perfect analog to the solution in terms of Fourier transform of a constant
coefficient equation. The construction appears to be finely adapted to this
particular kind of singular hyperbolic-elliptic systems. 

We expect that this representation will be useful in the future to
obtain relevant estimates for the non-linear case. In this sense, we believe
that the present result opens the possibility to analyze the full axially
symmetric Einstein equations in the maximal-isothermal gauge.

The plan of the article is the following. In section \ref{sec:main-result} we
present our main result. At the end of this section we present the strategy of
the proof, which is split in three main parts discussed in sections
\ref{sec:heuristic}, \ref{sec:fourier-transform-z} and
\ref{sec:hankel-transform-rho} respectively. Finally, we include an appendix in
which we collect some properties of Bessel functions used in this article.

\subsection{Statement of the main result.}
\label{sec:main-result}  

In the maximal-isothermal gauge, the linearized Einstein equations with respect
to a flat background (in the twist-free case) reduce to the following system of
equations for the functions $v$ and $\beta$
\begin{align}
 \label{eq:119b}
  \ddot v  &= \Ld v  - \frac{\partial_\rho v }{\rho}+
  \rho \partial_\rho \left(\frac{\beta}{\rho}\right),\\
  \label{eq:120b}
  \Ld \beta & = \frac{2}{\rho} \left(\Ld v - \frac{\partial_\rho
      v}{\rho}\right).  
\end{align}
See \cite{dain10} for the deduction and physical discussion of these equations.\footnote{
We have slightly changed the notation with respect to this reference where the
function $\beta$ was denoted by $\beta^\rho$ to point out that it is the $\rho$
component of the shift vector. Since in this article we will not discuss the
other components of the linear perturbation (which can all be written in terms
of $v$ and $\beta$) to simplify the notation we drop the index $\rho$.}

In these equations the coordinates are $(t,\rho,z)$. The relevant domain is the
half plane $0\leq \rho$, $-\infty <z<\infty$, which is denoted by $\Rdm$. The
axis is given by $\rho=0$ and it defines the boundary of the domain $\Rdm$.
A dot denotes time derivative,  $\Ld$ is the flat Laplacian in 2-dimensions
\begin{equation}
  \label{eq:15b}
\Ld v = \partial^2_\rho v +  \partial^2_z  v,
\end{equation}
and $\partial$ denotes partial derivative with respect to the spatial
coordinates $\rho$ and $z$. 

The solutions $(v(t,\rho,z),\beta(t,\rho,z))$ we seek for in this article are $C^{\infty}$ functions of $\field{R}\times \Rdm$ (meaning there is a $C^{\infty}$ extension of $(v,\beta)$ into a open neighborhood of $\field{R}\times \Rdm$ in $\field{R}\times \field{R}^{2}$). For linearized gravity it is necessary to impose the following boundary and asymptotic conditions (see \cite{dain10}).

For $\beta$ it is required
\begin{enumerate}
\item $\beta (t,0,z) =0$ and for every fixed $(t,z)$, $\beta(t,\rho,z)$ is an odd function of $\rho$, 
\item For every fixed time $(t)$, $\beta =O(r^{-1})$, and $\partial ^{k} \beta=O(r^{-1-k})$, where $r=\sqrt{\rho^2+z^2}$.
\end{enumerate}

For $v$ it is required
\begin{enumerate}
\item $v(t,0,z)=(\partial_{\rho} v)(t,0,z)=0$ and for every fixed $(t,z)$, $v(t,\rho,z)$ is an even function of $\rho$.
\item For every fixed time $(t)$, $v=O(r^{-2})$ and $\partial^{k} v=O(r^{-2-k})$.
\end{enumerate}

Equation \eqref{eq:119b} is a wave equation for $v$ and so it is neecesary to
prescribe as initial data, roughly speaking, the {\it position} and {\it velocity} at the initial time $(t=0)$ which we will denote as $v|_{t=0}$ and $\dot v|_{t=0}$.

\begin{Remark} From the series expansion argument presented in \cite{dain10} it follows
that, for analytic solutions, the requirement that $v$ is even (in $\rho$) and $\beta$ odd (for all times), follows only from the condition
$\beta(t,0,z)=0$ and $v(0,0,z)=(\partial_{\rho}v)(0,0,z)=0$.  
\end{Remark}

The strategy to solve the system \eqref{eq:119b}--\eqref{eq:120b} is to use an appropriate integral
transformation to the whole set of equations to obtain simpler ones in the
transformed variables.  This integral transform is a 
combination of a Fourier transformation in the $z$ coordinate and a Hankel
transform in the $\rho$ coordinate (see the appendix \ref{sec:bess-funct-hank}
for a definition of the Hankel transform). The explicit form of the integral
transform and its inverse are given by 
\begin{align}
  \label{eq:31}
 w(t,\rho,z) & =\ird \hat w (t,k,\lambda) \,
(k|\lambda|\rho)^{\frac{1}{2}}  J_{1}(k|\lambda|\rho) e^{2\pi i\lambda z}\,d k
d\lambda,\\
 \frac{\hat w(t,k,\lambda)}{|\lambda|} & =\ird  w (t,\rho,z) \,
(k|\lambda|\rho)^{\frac{1}{2}}  J_{1}(k|\lambda|\rho) e^{-2\pi i\lambda z}\,d \rho
dz,
\end{align}
where $J_1$ is the Bessel function of the first kind of order one. The ranges for the variables are $0\leq
k<\infty$ and $-\infty < \lambda < \infty$. They define the same domain of
integration as the variables $(\rho, z)$ and hence we denote it by
the same symbol $\Rdm$. By analogy to standard physical terminology we will call
the space comprised by $(\rho,z)$ the {\it physical space}, while the one comprised by $(k|\lambda|,\lambda)$
will be called the {\it momentum space}.

\begin{theorem} {\bf (Representation of solutions.)}
\label{t:main}
  Let $F(k,\lambda)$ and $G(k,\lambda)$ be two arbitrary smooth functions of compact
  support in $\Rdm$. Define $\hat w$ and $\hat \gamma $ by
\begin{multline}
  \label{eq:33}
  \hat w (t,k,\lambda) = \frac{k^{\frac{3}{2}}}{(1+k^{2})^{\frac{1}{2}}}\int_{k}^{\infty} \left
    [ F(\bar k,\lambda)\cos\left(\lambda t(1+\bar k^{2})^{\frac{1}{2}}\right)+\right.\\ +
   \left. \frac{G(\bar k,\lambda)}{\lambda (1+\bar k^{2})^{\frac{1}{2}}} \sin\left(\lambda
      t(1+\bar k^{2})^{\frac{1}{2}}\right)\right ] \, d\bar{k},
\end{multline}
and
\begin{equation}
\hat \gamma(t,k,\lambda)  =  2\hat w (t,k,\lambda) +
\frac{2k^{\frac{3}{2}}}{(1+k^2)^{\frac{3}{2}}}\int_k^\infty \frac{(1+\bar 
      k^2)^{\frac{1}{2}}}{\bar k^{\frac{1}{2}}} \hat w (t,\bar k,\lambda)\,
    d\bar k .   \label{eq:34} 
\end{equation}
For $\hat w$ and $\hat \gamma$ define $w$ and $\gamma$ by the integral
transformation \eqref{eq:31}. Then, the functions $v=\rho^{\frac{1}{2}} \bv$
and $\beta=\rho^{-\frac{1}{2}} \bb$ define a solution of equations
\eqref{eq:119b}--\eqref{eq:120b}, whose initial data $v|_{t=0}$ and $\dot{v}|_{t=0}$
is given from $F$ and $G$ by
\begin{align}
  \label{eq:4}
  v|_{t=0} &=\sqrt{\rho} \int_{\Rdm}
  \frac{k^{\frac{3}{2}}}{(1+k^2)^{\frac{1}{2}}}\left(\int_k^\infty F(\bar k,
  \lambda)\, d\bar k \right)(k|\lambda|\rho)^{\frac{1}{2}}  J_{1}(k|\lambda|\rho)
e^{2\pi i\lambda z}\,d k d\lambda, \\
\dot{v}|_{t=0} &=\sqrt{\rho} \int_{\Rdm}
  \frac{k^{\frac{3}{2}}}{(1+k^2)^{\frac{1}{2}}}\left(\int_k^\infty G(\bar k,
  \lambda)\, d\bar k \right)(k|\lambda\rho|)^{\frac{1}{2}}  J_{1}(k|\lambda|\rho)
e^{2\pi i\lambda z}\,d k d\lambda. 
\end{align}
\end{theorem}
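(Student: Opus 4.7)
The plan is to verify the theorem by direct substitution: defining $\hat w$ and $\hat\gamma$ from \eqref{eq:33}--\eqref{eq:34}, recovering $w$ and $\gamma$ via \eqref{eq:31}, and setting $v = \rho^{1/2}w$, $\beta = \rho^{-1/2}\gamma$, I would check that (i) $(v,\beta)$ satisfies \eqref{eq:119b}--\eqref{eq:120b} and (ii) the initial data matches \eqref{eq:4}. A short calculation with the product rule shows that, in the rescaled variables, the system becomes
\[
\ddot w = (A+B)w + \mathcal{C}, \qquad (A+B)(\gamma - 2w) = \mathcal{C},
\]
where $A := \partial_\rho^2 - 3/(4\rho^2)$, $B := \partial_z^2$, and $\mathcal{C} := \rho^{-1}\partial_\rho\gamma - (3/(2\rho^2))\gamma$. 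Combining the two equations gives the compact wave form $\ddot w = (A+B)(\gamma - w)$. The key algebraic fact motivating the transform \eqref{eq:31} is that the kernel $\psi_1(s\rho) := (s\rho)^{1/2}J_1(s\rho)$ satisfies $A\psi_1(s\rho) = -s^2\psi_1(s\rho)$; combined with the Fourier phase, $A+B$ is diagonalized (with $s = k|\lambda|$) as multiplication by $-\lambda^2(1+k^2)$, after normalization constants from the Fourier convention are absorbed into the Ansatz.

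The wave equation then reduces to a Fubini argument. Denote the integrand of \eqref{eq:33} by $H(t,\bar k,\lambda)$, so $\ddot H = -\lambda^2(1+\bar k^2)H$ by construction. Using \eqref{eq:34}, the identity $\frac{(1+\bar k^2)^{1/2}}{\bar k^{1/2}}\hat w(t,\bar k,\lambda) = \bar k\int_{\bar k}^\infty H(t,\bar{\bar k},\lambda)\,d\bar{\bar k}$, and swapping the order of integration (using $\int_k^{\bar{\bar k}} \bar k\, d\bar k = (\bar{\bar k}^2 - k^2)/2$), one obtains
\[
\hat\gamma(t,k,\lambda) - \hat w(t,k,\lambda) = \frac{k^{3/2}}{(1+k^2)^{3/2}}\int_k^\infty (1+\bar k^2)\,H(t,\bar k,\lambda)\,d\bar k.
\]
Multiplying by $-\lambda^2(1+k^2)$ and comparing with $\ddot{\hat w}(t,k,\lambda) = \frac{k^{3/2}}{(1+k^2)^{1/2}}\int_k^\infty \ddot H\,d\bar k$, the transformed wave equation $\ddot{\hat w} = -\lambda^2(1+k^2)(\hat\gamma - \hat w)$ holds.

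The elliptic equation is the most delicate step, because the operator $\partial_\rho^2 - \rho^{-1}\partial_\rho + 3/(4\rho^2)$ appearing in the transformed version of \eqref{eq:120b} is \emph{not} diagonalized by $\psi_1$. The Bessel identity $\partial_\rho\psi_1(s\rho) = s\psi_0(s\rho) - \psi_1(s\rho)/(2\rho)$, where $\psi_0(s\rho) := (s\rho)^{1/2}J_0(s\rho)$, combined with the Weber--Schafheitlin integrals $\int_0^\infty J_0(a\rho)J_1(b\rho)\,d\rho = \Theta(b-a)/b$ and $\int_0^\infty \rho^{-1}J_1(a\rho)J_1(b\rho)\,d\rho = \min(a,b)/(2\max(a,b))$, produces an exact cancellation of the two contributions to the projected kernel on $k' < k$ and leaves only a Volterra kernel supported on $k' > k$ of the form $K(k',k) \propto k^{3/2}(k')^{-1/2}$. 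Substituting \eqref{eq:34} into the resulting integral equation and performing a Fubini exchange analogous to that in the wave-equation verification reduces the check to the trivial algebraic identity $(1+\bar k^2) - (1+k^2) = \bar k^2 - k^2$. Finally, the initial data \eqref{eq:4} follows by setting $t = 0$ in \eqref{eq:33}, where $\cos 0 = 1$ and $\sin 0 = 0$, and substituting into \eqref{eq:31}; differentiating in $t$ at $0$ yields $\dot{v}|_{t=0}$. The main obstacle is the elliptic step: the exact cancellation of the two Weber--Schafheitlin contributions for $k' < k$, leaving only a clean Volterra structure on $k' > k$, is what makes the closed-form Ansatz \eqref{eq:34} possible.
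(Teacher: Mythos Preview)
Your argument is correct: you verify by direct substitution that the formulas \eqref{eq:33}--\eqref{eq:34} satisfy the transformed system, using Fubini exchanges and Weber--Schafheitlin integrals. The wave-equation check is exactly as you wrote, and the elliptic check also closes (the Fubini step produces the inner integral $\int_k^{\bar{\bar k}}\bar k(1+\bar k^2)^{-3/2}\,d\bar k=(1+k^2)^{-1/2}-(1+\bar{\bar k}^2)^{-1/2}$, after which the $\int\hat w/\bar k^{1/2}$ terms cancel; your stated identity $(1+\bar k^2)-(1+k^2)=\bar k^2-k^2$ is not quite the one that appears, but the mechanism is the same).

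The paper takes a different route: it \emph{derives} the formulas rather than verifying them. After computing $\Hk(T(\gamma))=-k^{3/2}E(k)$ with $E(k)=\int_k^\infty\hat\gamma/\bar k^{1/2}\,d\bar k$ (Lemma~\ref{L:bb}, via the single identity $(J_1(x)/x)'=-J_2(x)/x$ and one Weber--Schafheitlin integral, rather than your two-integral cancellation), the elliptic equation becomes a first-order linear ODE for $E$ in the variable $k$, solved by the integrating factor $(1+k^2)^{1/2}$ (Lemma~\ref{l:vb}). Substituting back into the wave equation gives an integro-differential equation for $\hat w$ alone; the key step (Lemma~\ref{l:sist}) is to set $a=(1+k^2)^{1/2}k^{-3/2}\hat w$ and then $b=\partial_k a$: differentiating in $k$ produces an exact cancellation between $-2ka$ and the boundary term of the integral, leaving the decoupled ODE $\ddot b=-(1+k^2)b$, whose solution is then integrated back to give \eqref{eq:33}.

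Your verification is shorter and entirely adequate as a proof of the theorem as stated. The paper's derivation is longer but explains \emph{where} the formulas come from and isolates the structural miracle --- the $\partial_k$-cancellation that decouples the modes --- which is invisible in a pure verification.
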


\begin{Remark} In this representation of solutions the boundary conditions required for $\beta$ and $v$ at the axis ({\it items 1} above) are automatically satisfied. The asymptotic conditions required at infinity ({\it items 2} above) are more difficult to establish, they require extensive technical analysis, and won't be included in the present article whose purpose is to introduce the generating formulas (\ref{eq:33}) and (\ref{eq:34}) in the most direct, simple and comprehensive fashion. In this sense we have preferred to use arbitrary functions $F$ and $G$ of compact support in the space $(k,\lambda)$. This family of generating functions is at the same time simple and rich, and allows us to avoid technical lengthy developments in the proof of Theorem \ref{t:main}.    
\end{Remark}

The proof of theorem \ref{t:main} is divided in three steps. In the first one,
we use the scale invariance of the system to introduce new rescaled
variables. The structure of the equations simplify in these new variables. This
is done in section \ref{sec:heuristic}. The second step (section
\ref{sec:fourier-transform-z}) is to use the standard Fourier transform in the
$z$ coordinate. That essentially eliminates the $z$ dependence of the
equations. Finally, in section \ref{sec:hankel-transform-rho} we analyze the
$\rho$ dependence of the equations using the Hankel transform. This is the most
important part of the article. The proof of Theorem \ref{t:main} is given thereafter.

\section{The equations in momentum space.}
\subsection{Scaling symmetry}
\label{sec:heuristic}
Equations \eqref{eq:119b}--\eqref{eq:120b} enjoy scaling symmetry (see
\cite{dain10}). This symmetry will play a fundamental role in the analysis of
these equations.  Let us describe this property.  For a given solution
$v(t,\rho,z)$ and $\beta(t,\rho,z)$ of equations
\eqref{eq:119b}--\eqref{eq:120b} we define the rescaled functions as
\begin{equation}
  \label{eq:173}
  v_s(\hat t, \hat \rho, \hat z)=v\left(\frac{t}{s},\frac{\rho}{s},\frac{z}{s}
  \right),  \quad \beta_s=
  \frac{1}{s}\beta\left(\frac{t}{s},\frac{\rho}{s},\frac{z}{s} 
  \right), 
\end{equation}
where
\begin{equation}
  \label{eq:174}
  \hat t =\frac{t}{s} , \quad \hat \rho=  \frac{\rho}{s}, \quad \hat z  =
  \frac{z}{s}. 
\end{equation}
and  $s$ is a positive real number.
Then, $v_s$ and $\beta_s$ define also a solution of equations
\eqref{eq:119b}--\eqref{eq:120b} written in terms of the rescaled coordinates $(\hat t,
\hat \rho, \hat z)$. 

Note that $v$ and $\beta$ have different scaling. This difference manifests
also in the power expansion series of these functions.

To analyze the equations it is very convenient to introduce rescaled functions
in such a way that both unknowns have the same scaling and the same 
behavior near the axis.  Let us consider
$\bv$ and $\bb$ defined by
\begin{equation}
  \label{eq:35}
\bv=\frac{v}{\sqrt{\rho}},  \quad \bb =\sqrt{\rho}\beta.  
\end{equation}
In terms of these variables, equations \eqref{eq:119b}--\eqref{eq:120b} are
given by
\begin{align}
 \label{eq:119bar}
  \ddot \bv  &= \Ld \bv  -\frac{3}{4}\frac{\bv}{\rho^2}+
  \sqrt{\rho}\partial_\rho \left(\bb \rho^{-3/2}\right),\\
  \label{eq:120bar}
  \Ld \bb-\frac{\partial_\rho \bb}{\rho}+\frac{3}{4}\frac{\bb}{\rho^2} & = 2
  \left(  \Ld \bv  -\frac{3}{4}\frac{\bv}{\rho^2}  \right) .   
\end{align}
The functions $(\bv, \bb)$ are scale invariant in the following sense. For  
a given solution  $(\bv, \bb)$ of equations
\eqref{eq:119bar}--\eqref{eq:120bar} the rescaled functions
\begin{equation}
  \label{eq:21}
  \bv_s(\hat t, \hat \rho, \hat z)=\bv\left(\frac{t}{s},\frac{\rho}{s},\frac{z}{s}
  \right),\quad  \bb_s(\hat t, \hat \rho, \hat
  z)=\bb\left(\frac{t}{s},\frac{\rho}{s},\frac{z}{s} 
  \right), 
\end{equation}
define also a solution of these equations in terms of the rescaled coordinates
\eqref{eq:174}. Also, for a given solution, we expect $\bv$ and $\bb$ to have the same
behavior at the axis, namely $\bv=\bb=O(\rho^{3/2})$. 

\begin{Remark} The important part of the rescaling \eqref{eq:35} is the relative power of
$\rho$ between $v$ and $\beta$ which compensates the different scale behavior.
We could consider scalings of the form $v=\rho^{\alpha}\bv$,
$\beta=\rho^{\alpha-1}\bb$ for any arbitrary real number $\alpha$. The specific
power chosen in \eqref{eq:35} is motivated by the Hankel transform (see
equations \eqref{eq:9}--\eqref{eq:9b} in section
\ref{sec:hankel-transform-rho}). This choice makes the appropriate integral
transformation symmetric with respect to its inverse.  Another possible choice
is $\alpha=2$. This power has the advantage that both functions $\bv$ and $\bb$
are even in $\rho$. However, the related Hankel transform is not symmetric with
respect to its inverse and the formula does not coincide with the definition
used in the literature. Hence to apply standard results concerning the
Hankel transform we need to translate them into this new definition. This makes
the proofs laborious, but there is no essential difficulty.
\end{Remark}  

The differential operators in the spatial coordinates involved in
this equation are given by 
\begin{equation}
  \label{eq:42}
  \mathbf{ P}(\bv)= \Ld \bv  -\frac{3}{4}\frac{\bv}{\rho^2},
\end{equation}
and
\begin{equation}
  \label{eq:43}
  T(\bb)=  \sqrt{\rho}\partial_\rho \left(\bb \rho^{-3/2}\right).
\end{equation}
The distinction in the notation between $\mathbf{P}$ and $T$ (boldface for
$\mathbf{P}$) is to emphasize that the differential operator $\mathbf{P}$ acts
in both coordinates $\rho$ and $z$ while $T$ does only in the $\rho$ coordinate. Later
on, we will define the operator $P$ as the $\rho$ part of $\mathbf{P}$ (see equation
 \eqref{eq:12}). As we will see, all the important features of the
equations are contained in the $\rho$ dependence. 

A remarkable fact is that the operators \eqref{eq:42} and \eqref{eq:43} are
also the natural operators for the second equation \eqref{eq:120bar}. Namely,
both equations  \eqref{eq:119bar}--\eqref{eq:120bar} are written in terms of
$\mathbf{ P}$ and $T$ as follows
\begin{align}
 \label{eq:119barP}
  \ddot \bv  &= \mathbf{P}(\bv) + T(\bb),\\
  \label{eq:120barP}
 \mathbf{P}(\bb) - T(\bb) & = 2\mathbf{P}(\bv).   
\end{align}
Note that this symmetry between both equations is not evident in terms of the
original variables $v$ and $\beta$. The fact that in the second equation
\eqref{eq:120barP} appears precisely this combination of $\mathbf{P}$ and $T$
will be crucial. {\it Our proofs will not work if we insert different (constant) coefficients
in front of $ \mathbf{P}$ and $T$ in these equations}.   

\begin{Remark} The operator $\textbf{P}$ is singular at the axis. However, this kind of
singular behavior is essentially the same as the one of the Laplace
operator in $n$-dimensions for axially-symmetric functions written in terms of
cylindrical coordinates (see, for example, the introduction of
\cite{weinstein53}).  A standard trick to avoid this problem is precisely to
work in a higher dimensional space in which the operator is regular.  This can
be done also in the case of the operator $\textbf{P}$. However, the operator
$T$ will not be regular in that higher dimensional space. It appears
not to be possible to write equations \eqref{eq:119barP}--\eqref{eq:120barP} as
regular equations on a single higher dimensional space.
\end{Remark}

It is the presence of the operator $T$ in these equations which makes them so
peculiar.  The operator $T$ is, outside the axis, a first order operator but at
the axis it is a second order operator (due to L'Hopital rule).  This behavior
indicates that we can not decompose \eqref{eq:119barP}--\eqref{eq:120barP} as a
principal part (containing only the second order operator $\textbf{P}$) plus
some lower order terms (containing only the operator $T$). This kind of
decomposition is essential to construct an iteration scheme in which each of
the equations is solved in alternative steps of the iteration. Outside the axis
this iteration scheme can be constructed, but it appears not to be possible to
include the axis (see the heuristic discussion in \cite{dain10}). In fact,
our analysis suggests that the system \eqref{eq:119barP}--\eqref{eq:120barP}
should be viewed as a unity that can not be further decomposed.

\subsection{The Fourier transform in $z$}
\label{sec:fourier-transform-z}

In equations \eqref{eq:119barP}--\eqref{eq:120barP} the $z$ dependence is clearly
simpler than the $\rho$ dependence. The equations are regular in $z$ and the
coefficients of the differential operators do not depend on $z$. 
Hence, in order to factor out the $z$ dependence we can use the
Fourier transform in this coordinate defined in the standard way by
\begin{align}
  \label{eq:F1}
 \F(f)=  \tilde f(\lambda) & =\iii f(z)  e^{-2\pi i\lambda z} \,dz,\\
 \Fi(\tilde f) =  f(z) & =\iii \tilde f(\lambda)  e^{2\pi i\lambda z}
 \,d\lambda \label{eq:F2}.  
\end{align}
Taking the Fourier transform in $z$ to equations
\eqref{eq:119barP}--\eqref{eq:120barP}, we  obtain the following equations for the 
 transformed functions $\tilde \bv(t,\rho,\lambda)$, $\tilde \bb(t,\rho,\lambda)$ 
\begin{align}
 \label{eq:119barrz}
 \ddot{\tilde \bv}& =P(\tilde \bv)-\lambda^2\tilde\bv +T(\tilde \bb),    \\ 
  \label{eq:120barrz}
 P(\tilde \bb)-\lambda^2\tilde\bb -T(\tilde \bb) &=2( P(\tilde
 \bv)-\lambda^2\tilde\bv ),   
\end{align}
where we have defined $P$ as the $\rho$ part of the operator $\mathbf{P}$, namely
\begin{equation}
  \label{eq:12}
  P(\tilde\bv)= \tilde \bv'' -\frac{3\tilde \bv}{4\rho^2}.
\end{equation}
A prime denotes derivative with respect to $\rho$. 

We use the scaling symmetry to reduce these equations to the case
$\lambda=1$ in the following way.  Define rescaled variables
\begin{equation}
  \label{eq:31b}
  \tilde t=t\lambda, \quad \tilde \rho =\rho |\lambda|, 
\end{equation}
then the rescaled functions $\tilde \bv_1(\tilde t, \tilde \rho)$, $\tilde \bb_1(\tilde
t, \tilde \rho)$ (no $\lambda$ dependence) satisfy the equations
 \begin{align}
   \label{eq:18}
 \ddot{\tilde \bv}_1& =P(\tilde \bv_1)-\tilde\bv_1 +T(\tilde \bb_1),  \\ 
   \label{eq:17}
   P(\tilde \bb_1)-\tilde\bb_1 -T(\tilde \bb_1) &=2( P(\tilde \bv_1)-\tilde\bv_1 ).
 \end{align}
 In these equations the derivatives are taken with respect to the rescaled
 coordinates \eqref{eq:31b}.  If we have a solution $\tilde \bv_1(\tilde t,
 \tilde \rho)$, $\tilde \bb_1(\tilde t, \tilde \rho)$ of equations
 \eqref{eq:18}--\eqref{eq:17}, the solution of the original equations
 \eqref{eq:119barrz}--\eqref{eq:120barrz} is given by
 \begin{equation}
   \label{eq:32}
 \tilde \bv(t,\rho,\lambda)= \tilde \bv_1( t\lambda, \rho|\lambda|),\quad  \tilde
 \bb(t,\rho,\lambda)= \tilde \bb_1( t\lambda, \rho|\lambda|). 
 \end{equation}

 The set of reduced equations \eqref{eq:18}--\eqref{eq:17} constitute our main
 equations. They encode all the main difficulties of the original equations.
 They will be solved in the next section.

\subsection{The Hankel transform in $\rho$} 
\label{sec:hankel-transform-rho}

To simplify the notation, let us write equations \eqref{eq:18}--\eqref{eq:17}
without the tilde and without the subscript $1$, namely
 \begin{align}
  \label{eq:18b}
 \ddot{\bv}& =P( \bv)-\bv +T( \bb),  \\ 
   \label{eq:17b}
   P(\bb)-\bb -T(\bb) &=2( P(\bv)-\bv ).
 \end{align} 
 The strategy to solve these equations is to expand the solution in terms of
 eigenfunctions of the operator $P$. That is, as a first step we look for a
 solution $\bj$ of the eigenvalue equation for $P$
 \begin{equation}
   \label{eq:11}
    P(\bj)=-k^2 \bj. 
 \end{equation}
By the same scaling argument  used in the previous section (with $\lambda$ replaced by
$k$), it is enough to consider the case $k=1$ 
\begin{equation}
   \label{eq:11b}
    P(\bj)=- \bj. 
 \end{equation}
 Set $\bj=\sqrt{\rho}J$, then equation \eqref{eq:11b} in terms of $J$ is given by
 \begin{equation}
   \label{eq:36}
 \rho^{2}J''+\rho J'+(\rho^{2}-1)J=0.   
 \end{equation}
 This is the Bessel equation (see equation \eqref{eq:24} in the appendix).  The
 behavior at the axis fixes the solution $J$ to be, up to a factor, the Bessel
 function of the first kind of order one, denoted by $J_{1}(\rho)$. Thus,
 $\bj=c\sqrt{\rho} J_1$ we take $c=\sqrt{k}$. After rescaling, we have that the
 eigenfunctions of \eqref{eq:11} are given by
 \begin{equation}
   \label{eq:6}
   \bj= \sqrt{k\rho}J_{1}(k\rho).
 \end{equation} 
 The solutions of equations \eqref{eq:18b}--\eqref{eq:17b} will be constructed
 as a linear superposition of the eigenfunctions $\bj$. As in the case of the
 Fourier transform with respect to the plane waves $ e^{2\pi i\lambda z}$, the
 superposition of the eigenfunctions \eqref{eq:6} lead to the following
 integral transforms
 \begin{align}
   \label{eq:9}
   \Hk(f)=  \hat f(k) & =\int_0^\infty f(\rho) \sqrt{k\rho} J_1(k\rho) \,d\rho,\\
   \Hki(\hat f) = f(\rho) & =\int_0^\infty \hat f(k) \sqrt{k\rho}
   J_1(k\rho)\,dk \label{eq:9b}.
 \end{align}
 These are Hankel transformations of first order (see equations
 \eqref{eq:1}--\eqref{eq:2} and also \cite{zemanian87} for further properties
 of the Hankel transform). The orthogonality property of the Bessel function (see
 equation \eqref{eq:orth-bessel}) ensure that $\Hki \Hk(f)=f$ for $f$ in an
 appropriate functional space (see the Appendix). We will call the space of $f(\rho)$ the ``physical
 space'' and the space of $\hat f(k)$ the ``momentum space''. 

 The rescaling \eqref{eq:35} has been tailored to obtain precisely this form of
 the Hankel transform. Other rescaling will produce integral transforms which
 are not symmetric with respect to their inverse. They will have different weights
 in $\rho$.

 Let $\hat f=\Hk(f)$ and $\hat g =\Hk(g)$, then we have the following Parseval-type of identity
 \begin{equation}
   \label{eq:10}
 \izi f(\rho)g(\rho)\, d\rho=  \izi\hat f(k) \hat g(k)  \,dk.
 \end{equation}
 That is, we have the following two  identical inner products 
\begin{equation}
\label{iso} 
 <f,g>=\izi  f(\rho)g(\rho)d\rho, \quad  <\hat f,\hat g>=\izi  \hat f(k)\hat
 g(k)dk,   
\end{equation}
defined naturally in the physical and in the momentum space respectively. 

The crucial property of the Hankel transformation $\Hk$ for our purposes is
its  behavior with respect to the differential operator $P$, namely
 \begin{equation}
   \label{eq:13}
   \Hk(P(v))=-k^2\Hk(v).
 \end{equation}
 This property follows after integration by parts and use of equation \eqref{eq:11}.
 Note also that the operator $P$ is self adjoint with respect to $<\ ,\ >$,
 namely
 \begin{equation}
   \label{eq:48}
   <w,P(v)>=  <P(w),v>.
 \end{equation}

\subsection{The Hankel transform acting on the differential operators}
\label{sec:transfer-operators}
To solve equations \eqref{eq:18b}--\eqref{eq:17b} we will apply the Hankel
transform \eqref{eq:9} to obtain simpler equations in momentum space. By
equation \eqref{eq:13}, the Hankel transform $\Hk$ acts naturally on the
operator $P$. However, this is not the case for the operator $T$. This operator
is the main source of the difficulties. The next lemma characterize the action
of $\Hk$ on $T$.

\begin{lemma}
 \label{L:bb}
 Let $\hat \bb(k)=\Hk(\bb)$ be of compact support. Then, the
 following relation holds
 \begin{equation}
   \label{eq:14}
 \Hk(T(\bb))(k)= -{k}^{\frac{3}{2}} E(k),
 \end{equation}
 where we have defined
 \begin{equation}
   \label{eq:15}
  E(k)= \int_{k}^{\infty} \frac{\hat\bb(\bar k)}{\bar k^{\frac{1}{2}}}d \bar k.
 \end{equation}
 which is a function of compact support. 
\end{lemma}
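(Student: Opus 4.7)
The plan is to compute $\Hk(T(\bb))(k)$ directly by integration by parts and then recognize the resulting integral through a classical Weber--Schafheitlin discontinuous integral. Unfolding the definitions of $\Hk$ and $T$,
\[
\Hk(T(\bb))(k)=\izi\sqrt{\rho}\,\partial_\rho\!\bigl(\bb\rho^{-3/2}\bigr)\sqrt{k\rho}\,J_1(k\rho)\,d\rho
=\sqrt{k}\izi\rho J_1(k\rho)\,\partial_\rho\!\bigl(\bb\rho^{-3/2}\bigr)\,d\rho.
\]

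I would then integrate by parts, transferring $\partial_\rho$ onto $\rho J_1(k\rho)$. The standard Bessel recurrence $(xJ_1(x))'=xJ_0(x)$ yields $\partial_\rho(\rho J_1(k\rho))=k\rho J_0(k\rho)$. The boundary term $\sqrt{k}\,[\rho^{-1/2}J_1(k\rho)\bb(\rho)]_0^\infty$ vanishes at $\rho=0$ because $J_1(k\rho)=O(\rho)$ and the expected axis behavior $\bb=O(\rho^{3/2})$ (which itself follows by Taylor expanding $\sqrt{\bar k\rho}\,J_1(\bar k\rho)$ inside the inverse Hankel representation of $\bb$ and using that $\hat\bb$ has compact support) give $O(\rho^2)$; at $\rho=\infty$, compact support of $\hat\bb$ together with repeated integration by parts in $\bar k$ inside the inverse Hankel integral yields rapid decay of $\bb$. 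The outcome is
\[
\Hk(T(\bb))(k)=-k^{3/2}\izi\bb(\rho)\,\rho^{-1/2}\,J_0(k\rho)\,d\rho.
\]

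The final step is to identify the remaining integral with $E(k)$. I would substitute the inverse Hankel representation $\bb(\rho)=\izi\hat\bb(\bar k)\sqrt{\bar k\rho}\,J_1(\bar k\rho)\,d\bar k$, interchange the order of integration (legitimate because $\hat\bb$ is compactly supported, so the outer $\bar k$-integration has compact range), and invoke the Weber--Schafheitlin discontinuous integral
\[
\izi J_0(k\rho)\,J_1(\bar k\rho)\,d\rho=\begin{cases}1/\bar k,& \bar k>k,\\ 0,& \bar k<k.\end{cases}
\]
This collapses the double integral to $\int_k^\infty\hat\bb(\bar k)/\bar k^{1/2}\,d\bar k=E(k)$, giving exactly the claimed formula. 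Compact support of $E$ follows immediately from that of $\hat\bb$: if $\hat\bb$ is supported in $[0,K]$, then $E(k)=0$ for $k\geq K$.

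The main obstacle is the rigorous handling of the Weber--Schafheitlin step and the associated Fubini argument, since the integral $\int_0^\infty J_0(k\rho)J_1(\bar k\rho)\,d\rho$ is only conditionally convergent and its interpretation as a step function in $\bar k$ is a classical but delicate identity. Compact support of $\hat\bb$ is the key ingredient that legitimizes the interchange and makes $E$ well-defined pointwise. A secondary, mostly technical, issue is verifying decay of $\bb$ at infinity to kill the boundary term after integration by parts; this again follows from smoothness and compact support of $\hat\bb$ via standard oscillatory-integral estimates applied to the inverse Hankel formula.
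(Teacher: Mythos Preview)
Your argument is correct and follows the same overall strategy as the paper: substitute the inverse Hankel representation of $\bb$ and reduce to a Weber--Schafheitlin discontinuous integral. The one genuine difference is in how you handle the $\rho$-derivative. The paper keeps the derivative inside and applies it to the inverse Hankel representation of $\bb$, using the identity $(J_1(x)/x)'=-J_2(x)/x$; this produces the integral $\int_0^\infty J_2(\bar k\rho)J_1(k\rho)\,d\rho$, which is the case $\nu=2$ of the Weber--Schafheitlin formula and yields $k/\bar k^2$ for $k<\bar k$. You instead integrate by parts in the outer $\rho$-integral, using $(xJ_1(x))'=xJ_0(x)$, which produces $\int_0^\infty J_1(\bar k\rho)J_0(k\rho)\,d\rho$, the case $\nu=1$ of the same formula giving $1/\bar k$ for $k<\bar k$. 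Both routes collapse to the same $E(k)$. The paper's version avoids having to check boundary terms at $\rho=0$ and $\rho=\infty$ (at the cost of justifying differentiation under the integral, which is immediate here), whereas your version requires the axis and decay discussion you included; either way the compact support of $\hat\bb$ is what makes everything go through.
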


Note that the function $E(k)$ satisfies the following equation 
 \begin{equation}
   \label{eq:16}
   \frac{dE(k)}{d k}=-\frac{\hat\bb (k) }{k^{\frac{1}{2}}}.
 \end{equation}
This relation will be useful later on.

 \begin{proof}

 From the definition we have that 
 \begin{equation}
   \label{eq:39}
 \Hk(T(\bb)) (k)=\izi
\sqrt{\rho}\partial_\rho\left(\bb\rho^{-3/2}  \right) (\rho k)^{\frac{1}{2}}
J_{1}(k\rho)d\rho,    
 \end{equation}
and
\begin{equation}
  \label{eq:44}
  \bb(\rho)= \izi \hat \bb(\bar k) (\rho\bar k)^{\frac{1}{2}}
J_{1}(\bar k\rho)d\bar k.  
\end{equation}
Multiplying by $\rho^{-3/2}$ and taking a $\rho$ derivative in equation
\eqref{eq:44} under the integral, we obtain
\begin{equation}
  \label{eq:46}
  \partial_\rho\left(\bb\rho^{-\frac{3}{2}}  \right)= \izi \hat \bb(\bar k)
\bar k^{\frac{1}{2}}   \partial_\rho \left( \frac{J_{1}(\bar k\rho)}{\rho}\right) d\bar k.
\end{equation}
 We use the relation \eqref{eq:28} to compute the derivative with respect to
 $\rho$ of $J_1$.  We have
 \begin{equation}
   \label{eq:37}
 \left(\frac{J_{1}(k\rho)}{\rho}\right)'=k^{2}\frac{d}{d x}
 \left(\frac{J_{1}(x)}{x}\right)=-\frac{k}{\rho}J_{2}(k\rho).    
 \end{equation}
where we have defined $x=k\rho$. Hence we obtain
\begin{equation}
  \label{eq:47}
  \partial_\rho\left(\bb\rho^{-\frac{3}{2}}  \right)=- \izi \hat \bb(\bar k)
  \bar k^{\frac{3}{2}}
 \frac{J_{2}(\bar k\rho)}{\rho} d\bar k. 
\end{equation}
Inserting  the expression \eqref{eq:47} for  $\partial_\rho\left(\bb\rho^{-3/2}
\right)$   into equation \eqref{eq:39}  we get
 \begin{equation}
  \label{eq:40}
\Hk(T(\bb)) (k)=-\izi {k}^{\frac{1}{2}}\bar k^{\frac{3}{2}} \hat \bb(\bar k)\, d\bar k \izi
 J_{2}(\bar k\rho)J_{1}(k\rho)d\rho.  
\end{equation}

From equation \eqref{eq:26}  we know that the integral
\begin{equation}
  \label{eq:41}
 \izi J_{2}(\bar k\rho)J_{1}( k \rho)d\rho,  
\end{equation}
is equal to $k/\bar k^{2}$ if ${k}<\bar k$ and it is equal to zero if ${k}>\bar
k$. Then, the conclusion of the lemma follows.
 \end{proof}

 The next step is to analyze equation \eqref{eq:17b}. In this equation there is
 no time derivative. We want to solve this equation for an arbitrary given
 function $\bv$, which is not necessarily a solution of the other equation
 \eqref{eq:18b}. In the next lemma, we construct such solution using the Hankel
 transform.

 \begin{lemma}
 \label{l:vb}
 Let $\bv$ be a given function, with $\hat\bv(k)=\Hk(\bv)$ of compact support. Then, the solution
 $\hat\bb=\Hk(\bb)$ of compact support of the equation \eqref{eq:17b} in momentum space is given by
  \begin{equation}
    \label{eq:19}
    \hat \bb(k)= 2\hat\bv(k) +
    \frac{2k^{\frac{3}{2}}}{(1+k^2)^{\frac{3}{2}}}\int_k^\infty \frac{(1+\bar 
      k^2)^{\frac{1}{2}}}{\bar k^{\frac{1}{2}}} \hat \bv(\bar k)\, d\bar k. 
  \end{equation}
 We also have that $E(k)$, defined by \eqref{eq:15}, is given by
 \begin{equation}
   \label{eq:20}
   E(k)=  \frac{2}{(1+k^2)^{\frac{1}{2}}}\int_k^\infty \frac{(1+\bar
      k^2)^{\frac{1}{2}}}{\bar k^{\frac{1}{2}}} \hat\bv(\bar k)\, d\bar k.
 \end{equation}

 \end{lemma}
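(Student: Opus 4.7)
The plan is to apply the Hankel transform to equation \eqref{eq:17b} and exploit the previous lemma to reduce the problem to a first order linear ODE for the auxiliary function $E(k)$, which can then be solved explicitly by an integrating factor.

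\textbf{Step 1: transform the equation.} Using property \eqref{eq:13}, $\Hk(P(\bb)) = -k^{2}\hat{\bb}$ and $\Hk(P(\bv)) = -k^{2}\hat{\bv}$, while Lemma \ref{L:bb} gives $\Hk(T(\bb)) = -k^{3/2}E(k)$. Taking $\Hk$ of both sides of \eqref{eq:17b} yields the algebraic identity
\begin{equation*}
-(1+k^{2})\hat{\bb}(k) + k^{3/2}E(k) = -2(1+k^{2})\hat{\bv}(k),
\end{equation*}
so that
\begin{equation*}
\hat{\bb}(k) = 2\hat{\bv}(k) + \frac{k^{3/2}}{1+k^{2}}\,E(k).
\end{equation*}
This already gives formula \eqref{eq:19} as soon as $E(k)$ is computed; it remains to determine $E(k)$ in terms of $\hat{\bv}$.

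\textbf{Step 2: derive and solve the ODE for $E(k)$.} Combining the previous identity with the relation \eqref{eq:16}, namely $\hat{\bb}(k) = -k^{1/2}\,dE/dk$, I obtain the first order linear ODE
\begin{equation*}
\frac{dE}{dk} + \frac{k}{1+k^{2}}\,E(k) = -\frac{2\hat{\bv}(k)}{k^{1/2}}.
\end{equation*}
Multiplying by the integrating factor $(1+k^{2})^{1/2}$, the left hand side becomes $\frac{d}{dk}\bigl[(1+k^{2})^{1/2}E(k)\bigr]$. Since $\hat{\bv}$ has compact support, so does $\hat\bb$ (this is the content of the lemma statement), and therefore $E(k) \to 0$ as $k \to \infty$ by its very definition \eqref{eq:15}. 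Integrating from $k$ to $\infty$ with this boundary condition gives
\begin{equation*}
-(1+k^{2})^{1/2}E(k) = -2\int_{k}^{\infty}\frac{(1+\bar{k}^{2})^{1/2}}{\bar{k}^{1/2}}\,\hat{\bv}(\bar{k})\,d\bar{k},
\end{equation*}
which is precisely \eqref{eq:20}. Substituting this expression for $E(k)$ back into the algebraic identity of Step 1 produces \eqref{eq:19}.

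\textbf{Step 3: consistency.} One should verify that the $\hat\bb$ so defined indeed has compact support, closing the argument: since $\hat{\bv}$ vanishes for $\bar k$ larger than some $K$, the integral in \eqref{eq:19} vanishes for $k\geq K$, and the prefactor $2\hat\bv(k)$ vanishes there as well, so $\hat\bb(k)=0$ for $k\geq K$. This also justifies a posteriori the boundary condition at infinity used when integrating the ODE.

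\textbf{Main obstacle.} The computation itself is short; the only subtle point is handling the boundary term at infinity when integrating the linear ODE. This is the reason the compact support hypothesis on $\hat{\bv}$ is imposed: it ensures that the integrating factor formula yields a unique solution without an arbitrary constant and that all manipulations under the integral (and the appeal to Lemma \ref{L:bb}, which was also stated for compactly supported $\hat\bb$) are legitimate.
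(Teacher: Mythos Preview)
Your proof is correct and follows essentially the same route as the paper: apply $\Hk$ to \eqref{eq:17b}, use Lemma~\ref{L:bb} and \eqref{eq:16} to obtain the first order linear ODE for $E$, solve it with the integrating factor $(1+k^{2})^{1/2}$, and fix the integration constant by the compact support requirement. The only cosmetic difference is that you first isolate the algebraic relation $\hat\bb = 2\hat\bv + k^{3/2}(1+k^{2})^{-1}E$ and substitute $E$ back into it, whereas the paper recovers \eqref{eq:19} directly from \eqref{eq:16}; your added Step~3 checking the compact support of $\hat\bb$ is a welcome clarification.
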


 \begin{proof}
 We apply the transform $\Hk$ to  equation  \eqref{eq:17b}. Using \eqref{eq:13} we get
 \begin{equation}
   \label{eq:22}
   -(1+k^2)\hat \bb (k)- \Hk(T(\bb))=-2(1+k^2)\hat \bv (k).
 \end{equation}
 We can use Lemma \ref{L:bb} to express $ \Hk(T(\bb)) $ in terms of $\hat
 \bb$. However, it is convenient to express everything in terms of $E(k)$
 (defined by \eqref{eq:15}) 
 instead of $\hat \bb (k)$. Using \eqref{eq:14}, \eqref{eq:15} and \eqref{eq:16} we
 obtain
 \begin{equation}
   \label{eq:23}
   \frac{dE(k)}{dk}
   +\frac{k }{1+k^{2}} E(k) =-\frac{2}{k^{1/2}}\hat \bv (k).  
 \end{equation}

 We  multiply by the integrating factor $(1+k^{2})^{\frac{1}{2}}$ to get 
 \begin{equation}
   \label{eq:45}
 \frac{d}{d
   k}\left((1+k^{2})^{\frac{1}{2}}E\right)=-2\frac{(1+{k}^{2})^{\frac{1}{2}}}{k^{\frac{1}{2}}}
 \hat \bv (k).   
 \end{equation}
 Integrating this equation and  forgetting the integrating constant to have a solution of compact support, we obtain equation
 \eqref{eq:20}. Equation \eqref{eq:19} follows directly using 
 \eqref{eq:16}. 
 \end{proof}

 Finally, in the next lemma we solve the whole system
 \eqref{eq:18b}--\eqref{eq:17b}.
\begin{lemma}
\label{l:sist}
  Let $F(k)$ and $G(k)$ be arbitrary functions of compact
  support. Then $\bv=\Hki( \hat \bv)$ and $\bb=\Hki(\hat \bb)$ define a
  solution of the system of equations \eqref{eq:18b}--\eqref{eq:17b}, where
\begin{align}
  \label{eq:38}
  \hat \bv (t,k) & = \frac{k^{\frac{3}{2}}}{(1+k^{2})^{\frac{1}{2}}}  \int_k^{\infty}
  \left( 
F(\bar k)\cos((1+\bar k^{2})^{\frac{1}{2}}t)+G(\bar k)\sin((1+\bar
k^{2})^{\frac{1}{2}}t)\right) d\bar{k},\\
 \hat \bb(t,k) & = 2\hat\bv(t,k) + \frac{2k^{\frac{3}{2}}}{(1+k^2)^{\frac{3}{2}}}\int_k^\infty \frac{(1+\bar
      k^2)^{\frac{1}{2}}}{\bar k^{\frac{1}{2}}} \hat \bv(t,\bar k)\, d\bar k.\label{eq:38b}
\end{align}
\end{lemma}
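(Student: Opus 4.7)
\textbf{Proof plan for Lemma \ref{l:sist}.} The plan is to verify that the two equations \eqref{eq:18b}--\eqref{eq:17b} hold in momentum space, i.e.\ after applying $\Hk$ to each, and then to invoke $\Hki$ on the resulting identities. Compact support of $F$ and $G$ ensures that $\hat\bv(t,\cdot)$ and $\hat\bb(t,\cdot)$ are compactly supported for every $t$, so all integrals converge and time-derivatives, Hankel transform, and its inverse commute freely.

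\textbf{Second equation.} Observe that \eqref{eq:38b} is exactly the formula \eqref{eq:19} of Lemma \ref{l:vb} applied to $\hat\bv(t,\cdot)$. Hence, for every fixed $t$, the pair $(\bv,\bb)=(\Hki(\hat\bv),\Hki(\hat\bb))$ solves \eqref{eq:17b} by Lemma \ref{l:vb}. This step requires no computation.

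\textbf{First equation.} I would take the Hankel transform of \eqref{eq:18b}. Using \eqref{eq:13}, the left and right sides become
\begin{equation*}
\ddot{\hat\bv}(t,k)=-(1+k^{2})\hat\bv(t,k)+\Hk(T(\bb))(k).
\end{equation*}
By Lemma \ref{L:bb}, $\Hk(T(\bb))(k)=-k^{3/2}E(k)$, and from \eqref{eq:20} in Lemma \ref{l:vb},
\begin{equation*}
E(k)=\frac{2}{(1+k^{2})^{1/2}}\int_{k}^{\infty}\frac{(1+\bar k^{2})^{1/2}}{\bar k^{1/2}}\hat\bv(t,\bar k)\,d\bar k.
\end{equation*}
So the first equation is equivalent (modulo $\Hki$) to
\begin{equation*}
\ddot{\hat\bv}(t,k)+(1+k^{2})\hat\bv(t,k)=-\frac{2k^{3/2}}{(1+k^{2})^{1/2}}\int_{k}^{\infty}\frac{(1+\bar k^{2})^{1/2}}{\bar k^{1/2}}\hat\bv(t,\bar k)\,d\bar k.
\end{equation*}

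\textbf{Verifying \eqref{eq:38}.} Write $H(t,\bar k):=F(\bar k)\cos((1+\bar k^{2})^{1/2}t)+G(\bar k)\sin((1+\bar k^{2})^{1/2}t)$, so that $\hat\bv(t,k)=\frac{k^{3/2}}{(1+k^{2})^{1/2}}\int_{k}^{\infty}H(t,\bar k)\,d\bar k$ and $\ddot H=-(1+\bar k^{2})H$. A direct differentiation under the integral gives
\begin{equation*}
\ddot{\hat\bv}+(1+k^{2})\hat\bv=\frac{k^{3/2}}{(1+k^{2})^{1/2}}\int_{k}^{\infty}\bigl(k^{2}-\bar k^{2}\bigr)H(t,\bar k)\,d\bar k.
\end{equation*}
Next I substitute the explicit form of $\hat\bv$ into the right-hand side of the target equation and apply Fubini's theorem to swap the order of integration on the region $\{k\le\bar k\le\bar{\bar k}\}$, which turns the factor $\bar k\,d\bar k$ into $\tfrac12(\bar{\bar k}^{2}-k^{2})$. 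This produces exactly $\frac{k^{3/2}}{(1+k^{2})^{1/2}}\int_{k}^{\infty}(k^{2}-\bar{\bar k}^{2})H(t,\bar{\bar k})\,d\bar{\bar k}$, matching the left-hand side.

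\textbf{Main obstacle.} The argument is essentially algebraic once the two previous lemmas are in place; the only delicate point is justifying the Fubini exchange and the differentiation under the integral sign, both of which are immediate from the compact support of $F$ and $G$ (and hence of $\hat\bv(t,\cdot)$ uniformly in $t$). Conclude by applying $\Hki$ to both verified momentum-space identities.
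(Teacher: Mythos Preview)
Your proof is correct, and it takes a genuinely different route from the paper's. Both arguments first dispose of \eqref{eq:17b} by invoking Lemma~\ref{l:vb}, and both reduce \eqref{eq:18b} via Lemmas~\ref{L:bb} and~\ref{l:vb} to the integro-differential equation
\[
\ddot{\hat\bv}(t,k)+(1+k^{2})\hat\bv(t,k)=-\frac{2k^{3/2}}{(1+k^{2})^{1/2}}\int_{k}^{\infty}\frac{(1+\bar k^{2})^{1/2}}{\bar k^{1/2}}\,\hat\bv(t,\bar k)\,d\bar k.
\]
From here the approaches diverge. The paper \emph{derives} \eqref{eq:38}: it sets $a(t,k)=(1+k^{2})^{1/2}k^{-3/2}\hat\bv(t,k)$, rewrites the equation as $\ddot a=-(1+k^{2})a-2\int_{k}^{\infty}\bar k\,a\,d\bar k$, and then observes that differentiating in $k$ produces a cancellation (the $-2ka$ from the quadratic term kills the boundary term from the integral), leaving the pure ODE $\ddot b=-(1+k^{2})b$ for $b=\partial_k a$; solving and integrating back yields \eqref{eq:38}. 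You instead \emph{verify} \eqref{eq:38} directly: with $H$ as defined, you compute $\ddot{\hat\bv}+(1+k^{2})\hat\bv$ and match it to the right-hand side after a Fubini swap that collapses $\int_{k}^{\bar{\bar k}}\bar k\,d\bar k$ to $\tfrac12(\bar{\bar k}^{2}-k^{2})$. Your route is shorter and entirely elementary once the formula is in hand; the paper's route is longer but explains \emph{why} the formula has this shape and isolates the structural cancellation that makes the system solvable. Both are complete, and your justification of Fubini and differentiation under the integral via compact support is exactly right.
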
 
\begin{proof}
  First, note the equation \eqref{eq:38b} is the solution of equation
  \eqref{eq:17b} given by lemma \ref{l:vb} if we consider $\hat \bv(t,k)$ as a
  given function. The only part we have to prove is that equation \eqref{eq:38}
  is also a solution of \eqref{eq:18b} in which $\bb$ is given by
  \eqref{eq:38b}.
  
We apply the Hankel transform to  equation \eqref{eq:18b} and we obtain the
following equation in momentum space 
 \begin{equation}
   \label{eq:7}
   \ddot{\hat {\bv}}(t,k)=-(1+k^2) \hat \bv (t,k)+ \Hk(T(\bb))  .
 \end{equation}
 Using the expression for $\Hk(T(\bb))$ obtained in lemma \ref{l:vb} we obtain
 \begin{equation}
   \label{eq:8}
    \ddot {\hat \bv}(t,k) =-(1+k^2)\hat \bv (t,k)-
    \frac{2k^{\frac{3}{2}}}{(1+k^2)^{\frac{1}{2}}}\int_k^\infty \frac{(1+\bar 
    k^2)^{\frac{1}{2}}}{\bar k^{\frac{1}{2}}} \hat \bv (t,\bar k)\, d\bar k . 
 \end{equation}
This equation involves only $\hat \bv$, and hence we have reduced the system
\eqref{eq:18b}--\eqref{eq:17b} to only one equation for one unknown. But
equation \eqref{eq:8} is not a differential equation, it is an
integro-differential equation. We can get a simpler expression if we define
$a(t,k)$ by
 \begin{equation}
   \label{eq:5}
 a(t,k)=\frac{(1+k^{2})^{\frac{1}{2}}}{k^{\frac{3}{2}}} \hat \bv (t,k). 
 \end{equation}
In terms of $a(t,k)$, equation \eqref{eq:8} is written as 
 \begin{equation}
   \label{Me}
 \ddot{a}(t,k)=-(1+k^{2}) a(t,k)-2\int_{k}^{\infty}\bar{k}
 a(t,\bar{k})\, d\bar{k}.
 \end{equation}
 Although equation \eqref{Me} looks certainly simpler than equation
 \eqref{eq:5}, the essential difficulty remains the same. However, a remarkable cancellation occurs if we take a
 derivative with respect to $k$ allowing us to convert this equation into a
 pure differential equation.  Namely, let us define
\begin{equation}
  \label{eq:51}
  b(t,k)=\frac{\partial}{\partial k} a(t,k).
\end{equation}
Then, taking a derivative with respect to $k$ of equation \eqref{Me} we obtain
the following remarkable simple differential equation for $b(k)$
 \begin{equation}
   \label{eq:49}
 \ddot{b}(t,k)=-(1+k^{2})b(t,k).     
 \end{equation}
The important fact is that the term $-2ka(t,k)$ that appears when the $k$
derivative is applied to $-k^2a(t,k)$  in equation \eqref{Me} cancels out by the
derivative of the integral (the factor $2$ in front of the integral is
crucial).  And hence in the final expression only appears $b(t,k)$ and not
$a(t,k)$. 

The solution of equation \eqref{eq:49} is given by
\begin{equation}
  \label{eq:52}
  b(t,k)=-F(k)\cos((1+k^{2})^{\frac{1}{2}}t)-G(k)\sin((1+k^{2})^{\frac{1}{2}}t),
\end{equation}
for arbitrary functions $F(k)$ and $G(k)$. We have written equation
\eqref{eq:52} with a minus sign just for convenience.

The function $a(t,k)$ is calculated integrating \eqref{eq:51}, namely
\begin{equation}
  \label{eq:53}
  a(t,k)= -\int_k^\infty b(t,\bar k) \, d\bar k. 
\end{equation}
There is a constant of integration that we have set to zero in \eqref{eq:53},
otherwise the function $\hat \bv$ defined by \eqref{eq:5} will not be of compact support.  

Using \eqref{eq:52}, \eqref{eq:53} and \eqref{eq:5}, expression
\eqref{eq:38} follows.   
\end{proof}

\section{\bf Proof of Theorem \ref{t:main}.}

\begin{proof} The theorem is a straightforward
consequence of lemma \ref{l:sist}, and the scaling invariance of the
equations. Namely, consider the solution of equations
\eqref{eq:18b}--\eqref{eq:17b} founded in lemma \ref{l:sist}. To be consistent
with the notation used in section \ref{sec:fourier-transform-z}, this solution
should be denoted by  $\tilde \bv_1(t,\rho)$ and $\tilde \bb_1(t,\rho)$.  Using
the scaling \eqref{eq:32}, out of  $\tilde \bv_1(t,\rho), \tilde \bb_1(t,\rho)$   we construct
\begin{equation}
  \label{eq:50}
  \tilde \bv(t,\rho,\lambda)= \tilde \bv_1(\lambda t,|\lambda| \rho),\quad 
  \tilde\bb(t,\rho,\lambda)=   \tilde \bb_1(\lambda t,|\lambda| \rho).  
\end{equation}
These are solutions of \eqref{eq:119barrz}--\eqref{eq:120barrz}. We apply the
inverse Fourier transform \eqref{eq:F2} to  $\tilde \bv(t,\rho,\lambda)$ and
$\tilde\bb(t,\rho,\lambda)$ to obtain the desired result. In \eqref{eq:33} we
have redefined the function $G(k,\lambda)$ to make simpler the connection with
the initial data. 
\end{proof}

\section*{Acknowledgments}

Most of this work took place during the visit of M. R. to FaMAF, UNC, in 2010. He
thanks  for the hospitality and support of  this institution. Part of this work
was also done during the conference ``PDEs, relativity \& nonlinear waves'',
Granada, April 5-9, 2010. The authors would like to thank the organizers of
this conference for the invitation.   

S. D. is supported by CONICET (Argentina).  This work was supported in part by
grant PIP 6354/05 of CONICET (Argentina), grant 05/B415 Secyt-UNC (Argentina)
and the Partner Group grant of the Max Planck Institute for Gravitational
Physics, Albert-Einstein-Institute (Germany).

 \appendix
 \section{Bessel functions and Hankel transform}
\label{sec:bess-funct-hank}

 We collect in this appendix properties on Bessel functions and Hankel transform
 that we use in this article.  A general reference is the classical book
 \cite{watson66} and also the book \cite{zemanian87} for the Hankel transform.  

 We will consider Bessel functions of the first kind, denoted by $J_\nu(x)$. We
 will restrict ourselves to the case where $\nu$ is a positive integer. 

 Bessel functions are 
 solutions of the Bessel equation
 \begin{equation}
   \label{eq:24}
   \frac{d^2J_\nu}{dx^2}+\frac{1}{x}\frac{dJ_\nu}{dx}+\left(1-\frac{\nu^2}{x^2}\right)
   J_\nu=0.  
 \end{equation}
 For $\nu\geq 0$, a series expansion of these function is given by
 \begin{equation}
   \label{eq:25}
   J_\nu(x)=\left(\frac{x}{2}\right)^\nu \sum_{j=0}^\infty \frac{(-1)^j}{j!
       \Gamma(j+\nu+1) }\left(\frac{x}{2}\right)^{2j},
 \end{equation}
 where $\Gamma$ denotes the standard Gamma function. Since in our case $\nu$ is
 a positive integer we  have
 $\Gamma(j+\nu+1)=(j+\nu)!$. From \eqref{eq:25} we deduce that
 $J_\nu(x)$ is an even function of $x$ for $\nu$ even and an odd function of $x$
 for $\nu$ odd. 

 We have the following orthogonality property for Bessel functions  
 \begin{equation}
   \label{eq:orth-bessel}
 \int_{0}^{\infty} \rho J_{\nu}(k\rho)J_{\nu}(\bar k\rho)d
 \rho=\frac{1}{k}\delta (\bar k-k).  
 \end{equation}

 The following integral is used in the proof of lemma \ref{L:bb} (see
 \cite{watson66}, p. 406)
 \begin{equation}
   \label{eq:26}
   \int_0^\infty J_\nu(at)J_{\nu-1}(bt)\, dt= 
 \begin{cases} 
 b^{\nu-1}/a^\nu, \\
 1/(2b),\\
 0,
 \end{cases}
 \end{equation}
 where the first value corresponds to $b<a$, the second to $b=a$ and the third
 to $b>a$. 

 We make use also of the following relation (\cite{watson66} p. 45)
 \begin{equation}
   \label{eq:27}
   x\frac{dJ_\nu(x)}{dx} -\nu J_\nu(x)=-x J_{\nu+1}(x).
 \end{equation}
 For the case $\nu=1$ this relation can be written in the form
 \begin{equation}
   \label{eq:28}
   \frac{d}{dx} \left(\frac{J_1(x)}{x}\right)= - \frac{J_2(x)}{x}.
 \end{equation}

 The Hankel transform of order $\nu$ of the function $f$ is defined as
 \begin{equation}
   \label{eq:1}
\Hkm(f) = \int_0^\infty  f(\rho) \sqrt{k\rho}  J_\nu(k\rho) d\rho, 
 \end{equation}
 and the inverse is given by
 \begin{equation}
   \label{eq:2}
\Hkim(F) = \int_0^\infty F(k) \sqrt{k\rho}  J_\nu(k\rho)  dk.
 \end{equation}
The formula \eqref{eq:orth-bessel} guarantee that $\Hkim(\Hkm(f))=f$ with $f$ 
in an appropriate functional space. More precisely the Hankel transform (of order one) is an automorphism 
on the space ${\mathcal{H}}_{1}$ of $C^{\infty}$ functions $f:\field{R}^{+}\rightarrow \field{R}$ provided with the family of seminorms $\gamma_{m,k}$ which for each pair of non-negative integers $m$ and $k$ are defined as
\begin{equation}
\gamma_{m,k}(f)=\sup_{0<\rho<\infty}|\rho^{m}(\rho^{-1}\partial_{\rho})^{k}[\rho^{-3/2}f(\rho)]|.
\end{equation}

\n A function $f$ in ${\mathcal{H}}_{1}$ has an expansion of the form
\begin{equation}
f(\rho)=\rho^{\frac{3}{2}}(a_{0}+a_{2}\rho^{2}+\ldots+a_{2l}\rho^{2l}+R_{2l}(\rho)),
\end{equation}

\n and is of fast decay, namely, decays faster than any power of $1/\rho$ as $\rho\rightarrow \infty$ (Lemma 5.2.1 in pg 130 of \cite{zemanian87})). 


\end{document}